\newif\ifChinese
\Chinesefalse
\documentclass[11pt,letterpaper]{article}
\usepackage[letterpaper,margin=1in]{geometry}
\usepackage{setspace}
\usepackage{amsmath,amsthm,amssymb,mathtools}
\usepackage[no-math]{fontspec}
\usepackage{microtype,booktabs,array}
\ifChinese
  \usepackage{xeCJK}
  \setCJKsansfont[Path=fonts/,Script=Default,BoldFont=FandolHei-Bold.otf]{FandolHei-Regular.otf}
  \setCJKmonofont[Path=fonts/,Script=Default]{FandolSong-Regular.otf}
  \renewcommand{\abstractname}{摘要}

\fi
\usepackage[numbers,sort&compress]{natbib}
\usepackage[colorlinks=true,citecolor=blue,linkcolor=blue,urlcolor=blue]{hyperref}
\hypersetup{pdfauthor={},pdfsubject={Tight value-oracle lower bound for monotone k-submodular maximization}}
\theoremstyle{plain}
\ifChinese
  \newtheorem{theorem}{定理}[section]
  \newtheorem{lemma}[theorem]{引理}
\else
  \newtheorem{theorem}{Theorem}[section]
  \newtheorem{lemma}[theorem]{Lemma}
\fi
\newcommand{\R}{\mathbb R}
\newcommand{\E}{\mathbb E}
\newcommand{\Prob}{\mathbb P}
\newcommand{\supp}{\operatorname{supp}}
\newcommand{\OPT}{\operatorname{OPT}}
\newcommand{\x}{\mathbf x}
\newcommand{\y}{\mathbf y}
\newcommand{\zero}{\mathbf 0}
\newcommand{\one}{\mathbf 1}
\newcommand{\pos}[1]{\left(#1\right)_+}
\makeatletter
\renewenvironment{abstract}{\normalsize\begin{center}\bfseries\abstractname\end{center}\quotation}{\endquotation}
\makeatother
\allowdisplaybreaks[1]
\date{}
\author{}

\usepackage{authblk}
\title{\textbf{A Tight Oracle Threshold for \(k\)-Submodular Maximization}}
\author[]{Siyuan Chen}
\affil[]{School of Mathematical Sciences,
	University of Chinese Academy of Sciences,
	Beijing 100049, China}
\date{\vspace{-1.5em}
	\small\texttt{chensiyuan221@mails.ucas.ac.cn}}
\hypersetup{pdftitle={A Tight Oracle Threshold for \(k\)-Submodular Maximization}}
\hypersetup{pdfauthor={Siyuan Chen}}

\begin{document}
\hypersetup{pageanchor=false}
\maketitle
\thispagestyle{empty}
\vspace{1.5em}

\begin{abstract}
We prove that, for  every \(\varepsilon>0\), a
\((\frac{k}{2k-1}+\varepsilon)\)-approximation algorithm for maximizing monotone
\(k\)-submodular functions requires exponentially many value-oracle queries.
This improves the \(\frac{k+1}{2k}\) oracle bound of Iwata, Tanigawa, and Yoshida.
Their \(\frac{k}{2k-1}\)-approximation algorithm therefore gives the exact threshold
in the monotone setting. For \(k\ge5\), our bound also matches, up to an
arbitrarily small loss, Aggarwal's recent approximation guarantee for
 non-monotone \(k\)-submodular maximization.
\end{abstract}

\section{Introduction}
Let \(V=\{e_1,\ldots,e_n\}\) be a ground set and \([k]=\{1,\ldots,k\}\).
Following Chen et al.~\cite{chen}, let \((k+1)^V\) denote the family of
\(k\)-tuples \((X_1,\ldots,X_k)\) of pairwise disjoint subsets of \(V\).
An assignment \(\x\in(k+1)^V\) gives each selected element a label in \([k]\).
Equivalently, \(\x_e=i\) if \(e\in X_i\), and \(\x_e=0\) if \(e\) is unassigned.
Set \(\supp(\x)=\bigcup_iX_i\) and \(\zero=(\emptyset,\ldots,\emptyset)\).
For \(\x=(X_1,\ldots,X_k)\) and \(\y=(Y_1,\ldots,Y_k)\), define
\[
 \x\sqcap\y=(X_i\cap Y_i)_{i=1}^k,
 \qquad
 \x\sqcup\y=\left((X_i\cup Y_i)\setminus\bigcup_{j\ne i}(X_j\cup Y_j)\right)_{i=1}^k.
\]
A function \(f:(k+1)^V\to\R\) is \emph{\(k\)-submodular} if
\begin{equation}\label{eq:ksub}
 f(\x)+f(\y)\ge f(\x\sqcap\y)+f(\x\sqcup\y)
 \qquad(\x,\y\in(k+1)^V).
\end{equation}
Write \(\x\preceq\y\) if \(X_i\subseteq Y_i\) for every \(i\).
The function is \emph{monotone} if \(f(\x)\le f(\y)\) whenever \(\x\preceq\y\).
For \(e\notin\supp(\x)\), its label-\(i\) marginal is
\[
 \Delta_{e,i}f(\x)=f(\x+i\one_e)-f(\x),
\]
where \(\x+i\one_e\) assigns label \(i\) to \(e\).
\emph{Orthant submodularity} means
\begin{equation}\label{eq:orthant}
 \Delta_{e,i}f(\x)\ge\Delta_{e,i}f(\y)
 \qquad(\x\preceq\y,\ e\notin\supp(\y),\ i\in[k]).
\end{equation}
When \(k=1\), \(k\)-submodularity is classical submodularity; in the unconstrained monotone
case, selecting the entire ground set \(V\) is optimal. We therefore consider \(k\ge2\).

In the value-oracle model, a query \(\x\) returns the exact value \(f(\x)\).
An \(\alpha\)-approximation outputs \(\x\) satisfying
\(\E[f(\x)]\ge\alpha\OPT(f)\), where
\(\OPT(f)=\max_{\y\in(k+1)^V}f(\y)\).
Iwata, Tanigawa, and Yoshida~\cite{ity} gave a
\(k/(2k-1)\)-approximation algorithm and proved that improving on
\((k+1)/(2k)\) requires exponentially many queries. We close this gap.

\begin{theorem}\label{thm:main}
For every fixed integer \(k\ge2\) and every \(\varepsilon>0\), any randomized
\((k/(2k-1)+\varepsilon)\)-approximation algorithm for maximizing nonnegative
monotone \(k\)-submodular functions requires exponentially many
value-oracle queries in the worst case.
\end{theorem}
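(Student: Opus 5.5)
We will use the symmetry-gap method of Vondrák, adapted to \(k\)-submodular functions as Iwata, Tanigawa, and Yoshida did for their \((k+1)/(2k)\) bound. The plan has three ingredients.

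\textbf{Symmetric gadget.} First, build a small monotone \(k\)-submodular ``gadget'' \(g\) on a constant-size ground set. We will actually work with a continuous extension \(G\) over the product of simplices \(\Delta_k^{m}\), where each coordinate records the fractional label distribution of one element (with the leftover mass meaning ``unassigned''). The gadget should be invariant under a group \(\mathcal G\) of symmetries that permutes labels and elements. Its key property is
\[
\max_{\bar\x\ \mathcal G\text{-symmetric}} G(\bar\x)\le\Bigl(\tfrac{k}{2k-1}+\delta\Bigr)\max_{\x} G(\x).
\]
A natural candidate generalizes the ITY construction with one extra ``partial-credit'' parameter. Take two elements \(a,b\). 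Give the value \(1\) when \(a,b\) receive distinct labels. Give the value \(\frac{k}{2k-1}\)-type partial credit when only one element is labeled, or when both carry the same label. Choose these values so that orthant submodularity and pairwise monotonicity hold with equality on the extremal configurations. The symmetric optimum is then the uniform fractional point. Computing \(G\) there gives a one-variable optimization in the assigned mass \(p\), and its maximum should come out to exactly \(k/(2k-1)\). Getting these numbers right is a finite computation that I will check by LP: maximize the gap subject to the linear constraints \eqref{eq:ksub} and monotonicity.

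\textbf{Blow-up and indistinguishability.} Second, blow each gadget element up into a block of \(N\) elements, and define \(f(\x)=G(\text{label-fraction vector of }\x)\). Then construct a perturbed \(\tilde f(\x)=\tilde G(\cdot)\) in which the blocks are secretly split along a hidden random partition. The perturbation must satisfy two conditions. Off a neighborhood of the symmetric subspace, \(\tilde G\) still attains \(\max G\). Inside the neighborhood, \(\tilde G(\x)=\hat G(\bar\x)\), where \(\hat G\) depends only on the symmetrized point. Standard Chernoff concentration then shows that any query made without knowledge of the hidden partition lands in the neighborhood, except with probability \(e^{-\Omega(N)}\). Therefore a subexponential number of queries cannot distinguish \(f\) from \(\tilde f\). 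On \(f\), the algorithm can do no better than the symmetric value, up to \(\delta\), while \(\OPT(\tilde f)\approx \max G\). Yao's principle then extends the argument to randomized algorithms.

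\textbf{Main obstacle.} The hard step is showing that the blown-up functions are genuinely monotone \(k\)-submodular. For \(k=1\), Vondrák's theorem gives this property for free through concave-along-directions extensions, but no general \(k\)-submodular analogue is available. I would therefore exploit the specific gadget. I would write \(G\) as an explicit polynomial in the label fractions and verify two properties directly. First, the mixed second differences are nonpositive, which gives orthant submodularity. Second, the pairwise monotonicity condition \(\Delta_{e,i}+\Delta_{e,j}\ge0\) holds for \(i\ne j\). Together these characterize \(k\)-submodularity via Ward–Živný. For the smoothing step, I would use the ITY-style interpolation \(\tilde G=G(\bar\x)+\phi(\|\x-\bar\x\|)\cdot(\ldots)\) with a carefully chosen Lipschitz \(\phi\), and then recheck both second-order conditions in each region. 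This verification is where the exponent \(k/(2k-1)\), rather than \((k+1)/(2k)\), must come out. It is the step I expect to require the most care.
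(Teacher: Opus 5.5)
\textbf{The gap is the hard instance itself.} The proof consists of a function that is label-independent on typical queries, larger by the factor $(2k-1)/k$ at the hidden optimum, and still monotone $k$-submodular. Your proposal defers exactly this, and the gadget you sketch cannot supply it.

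Let $\alpha$ be the value when one element is labeled and $\beta$ the value when both carry equal labels. Orthant submodularity gives $f(b{\mapsto}2)-f(\zero)\ge f(a{\mapsto}1,b{\mapsto}2)-f(a{\mapsto}1)$, i.e.\ $\alpha-f(\zero)\ge1-\alpha$, so $\alpha\ge\frac12$. Monotonicity gives $\beta\ge\alpha$. The uniform symmetric point at full mass therefore has multilinear value $\frac{\beta}{k}+\frac{k-1}{k}\ge1-\frac1{2k}$. This exceeds $\frac{k}{2k-1}$ for every $k\ge2$, and for $k\ge3$ it even exceeds the old bound $\frac{k+1}{2k}$.

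Changing the credits does not help. Take $k=2$ and any two-element gadget with optimum $1$ at distinct labels that is invariant under swapping the elements or the labels. Invariance gives $f(a{\mapsto}1,b{\mapsto}2)=f(a{\mapsto}2,b{\mapsto}1)=1$. Orthant submodularity gives $f(a{\mapsto}1)+f(b{\mapsto}2)\ge1$, and monotonicity then gives $f(a{\mapsto}1,b{\mapsto}1)+f(a{\mapsto}2,b{\mapsto}2)\ge1$. So the symmetric point with both elements uniform on $\{1,2\}$ has value at least $\frac14(1+1+1)=\frac34>\frac23$. Your claim that the one-variable symmetric optimization comes out to exactly $k/(2k-1)$ is false, and an LP over such gadgets would show this.

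\textbf{What is missing} is a construction for which the monotonicity and orthant-submodularity check can actually be carried out. A generic smoothing $G(\bar{\x})+\phi(\|\x-\bar{\x}\|)\cdot(\ldots)$ leaves exactly that check open, as you note yourself.

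The paper gives each element its own hidden label $\sigma_e$ and records only the fractions $(x,y)$ of correctly and incorrectly labeled elements. It then uses the explicit kernel $H=1-A^q+qbA^{q-1}$, with $q=2-\frac1k$, $a=1-x-y$, $b=\pos{x-\frac{y+k\delta}{k-1}}$ and $A=a+b$. This is Burkholder's homogeneous kernel with a one-sided cutoff.

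Below the cutoff, $H=1-(1-x-y)^q\le1$ depends only on the total mass $x+y$. By Hoeffding, any fixed query lies below the cutoff except with probability at most $e^{-2\delta^2n}$. Meanwhile $H(1,0)\ge q(1-\delta)$, and this is where the ratio $1/q=k/(2k-1)$ comes from.

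The cutoff is placed so that $A_x=0$ when $b>0$. This gives $H_{xx}=0$, $H_{xy}=-qA^{q-2}$ and $H_{yy}\le0$, with first partials continuous across the cutoff line. Integrating the partials yields monotonicity and orthant submodularity, hence $k$-submodularity.

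Since $q$ is not an integer, the profile $1-(1-t)^q$ is not a polynomial in $t$. So the paper's instance is not the blow-up of any finite gadget's multilinear extension. Your indistinguishability step is fine and matches the paper: concentration, a union bound over the queries of a run against the label-independent function, then averaging over the random tape. The missing piece is the function.
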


Theorem~\ref{thm:main} matches the monotone algorithm of
Iwata et al.~\cite{ity}. Since the hard instances are monotone, the same
lower bound holds for the general nonnegative class. For \(k\ge5\),
Aggarwal~\cite[Corollary 4]{aggarwal} gives a
\((k/(2k-1)-\eta)\)-approximation for every \(\eta>0\).

We follow the hidden-partition, concentration, and adaptive-query
indistinguishability framework of Iwata et al.~\cite{ity}.
The main improvement is an explicit two-variable kernel with a one-sided
cutoff, obtained from Burkholder's homogeneous kernel~\cite{burkholder}.
The cutoff makes typical queries reveal no hidden-label information, while
the kernel preserves monotonicity and every orthant diminishing-return
inequality. 

\paragraph{Related work.}
Feige, Mirrokni, and Vondr\'ak~\cite{fmv} proved a \(1/2\) oracle
barrier even for symmetric nonnegative submodular functions, using a
hidden random partition and concentration. Vondr\'ak~\cite{vondrak}
developed the symmetry-gap framework, which converts a gap between
symmetric and optimal solutions into oracle hardness through refinement
and smoothing. Iwata et al.~\cite{ity} adapted the hidden-partition
approach to monotone \(k\)-submodular maximization.

\section{Construction and proof}
\paragraph{Overview.}
Give each element an independent hidden preferred label. For a query,
let \(x\) and \(y\) be the fractions of correctly and incorrectly labeled
elements. A fixed query has \(x\) close to \((x+y)/k\).
We construct a kernel that depends only on \(x+y\) throughout this region,
but takes a larger value at the fully correct assignment \((1,0)\).
We first verify the kernel by its derivatives, then prove that it induces
monotone \(k\)-submodular functions, and finally compare oracle transcripts.

\paragraph{The kernel.}
Fix \(0<\delta<(k-1)/k\). On
\(D=\{(x,y)\in\R_{\ge0}^2:x+y\le1\}\), put
\begin{equation}\label{eq:parameters}
 q=2-\frac1k,\qquad a=1-x-y,\qquad
 b=\pos{x-\frac{y+k\delta}{k-1}},\qquad A=a+b,
\end{equation}
where \((t)_+=\max\{t,0\}\), and define
\begin{equation}\label{eq:kernel}
 H(x,y)=1+\bigl((q-1)b-a\bigr)A^{q-1}
       =1-A^q+qbA^{q-1}.
\end{equation}
At \(A=0\), the product in~\eqref{eq:kernel} is zero.
The homogeneous term \(((q-1)b-a)(a+b)^{q-1}\) is a positive scalar
multiple of Burkholder's \(q\)-kernel~\cite[Eq.~(1.6)]{burkholder}.
Here the coordinate choice~\eqref{eq:parameters} and the one-sided cutoff
make that kernel suitable for the hidden-label construction.

\begin{lemma}\label{lem:kernel}
The function \(H\) is coordinatewise nondecreasing on \(D\), and each of
its first partial derivatives is coordinatewise nonincreasing.
Moreover,
\begin{equation}\label{eq:endpoint}
 H(0,0)=0,\qquad
 0\le H(x,y)\le H(1,0)
 =1+(q-1)\left(1-\frac{k\delta}{k-1}\right)^q.
\end{equation}
\end{lemma}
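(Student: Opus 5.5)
The plan is to change variables to \((a,b)\), compute the two first partials of \(H\) in closed form, and read off every claim from those formulas. Write \(G(a,b)=1-(a+b)^q+qb(a+b)^{q-1}\), so that \(H(x,y)=G(a,b)\). Away from \(A=0\),
\[
 G_a=qA^{q-2}\bigl((q-2)b-a\bigr),\qquad G_b=q(q-1)bA^{q-2}.
\]
We have \(a_x=a_y=-1\). On the region \(\{b>0\}\), \(b_x=1\) and \(b_y=-1/(k-1)\); on \(\{b=0\}\) both derivatives of \(b\) vanish.

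\emph{Closed forms for \(H_x,H_y\).} The choice \(q=2-1/k\) gives \(2-q=1/k\) and \((q-1)/(k-1)=1/k\). Substituting into the chain rule, both regions yield the same expressions:
\[
 H_x=qA^{q-1},\qquad H_y=q\,a\,A^{q-2}.
\]
On \(\{b=0\}\) we have \(A=a\), so these agree with the direct computations \(H_x=H_y=qa^{q-1}\). The formulas are continuous across the interface \(x=(y+k\delta)/(k-1)\), so \(H\) is \(C^1\) off the single point where \(A=0\). That point is handled by continuity: \(H\) is continuous on \(D\) because \(q>1\). Both partials are \(\ge0\), so \(H\) is coordinatewise nondecreasing.

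\emph{Diminishing returns.} Because the partials are continuous, it suffices to check monotonicity of each partial separately on \(\{b>0\}\) and on \(\{b=0\}\).
\begin{itemize}
 \item For \(H_x=qA^{q-1}\) with \(q-1>0\), it suffices that \(A\) is nonincreasing. Indeed \(A_x=-1+\one[b>0]\le0\) and \(A_y=-1-\one[b>0]/(k-1)\le0\).
 \item For \(H_y=qaA^{q-2}\) on \(\{b=0\}\), this is \(qa^{q-1}\), which is decreasing in both variables.
 \item On \(\{b>0\}\), \(A\) is constant in \(x\) while \(a\) decreases, so \(H_y\) is nonincreasing in \(x\).
 \item In \(y\) on \(\{b>0\}\), using \(A_y=-k/(k-1)\),
\[
 \partial_y\bigl(aA^{q-2}\bigr)=-A^{q-3}\Bigl(A-\tfrac{a}{k-1}\Bigr)=-A^{q-3}\Bigl(b+\tfrac{k-2}{k-1}a\Bigr)\le0
\]
for \(k\ge2\).
\end{itemize}
This is the step I expect to be the main obstacle. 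The negative exponent \(q-2\) could a priori make \(H_y\) increase, and it is exactly the cancellation coming from \(q=2-1/k\) and the slope \(1/(k-1)\) of the cutoff that prevents this.

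\emph{Endpoint values.} At \((0,0)\) we have \(a=1\) and \(b=0\), so \(H(0,0)=1-1=0\). Nonnegativity then follows from coordinatewise monotonicity, since every point of \(D\) dominates \((0,0)\).

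For the upper bound, fix \(s=x+y\) and move mass from \(y\) to \(x\) along the segment. The derivative along this direction is
\[
 H_x-H_y=qA^{q-2}(A-a)=qbA^{q-2}\ge0,
\]
so \(H(x,y)\le H(s,0)\). Monotonicity in \(x\) then gives \(H(s,0)\le H(1,0)\).

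Finally, at \((1,0)\) we have \(a=0\) and \(b=A=\beta:=1-k\delta/(k-1)\), which is positive since \(\delta<(k-1)/k\). Therefore \(H(1,0)=1-\beta^q+q\beta^q=1+(q-1)\beta^q\), which is \eqref{eq:endpoint}.
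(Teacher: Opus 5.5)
Your proof is correct and essentially matches the paper's: you use the same closed forms \(H_x=qA^{q-1}\), \(H_y=qaA^{q-2}\) on both sides of the cutoff line, the same sign check (your \(\partial_y(aA^{q-2})\) computation is the paper's \(H_{yy}\) entry), and the same gluing across the line; only the upper bound differs slightly (you use \(H_x-H_y=qbA^{q-2}\ge0\) along \(x+y=s\), where the paper pushes to the face \(a=0\) and evaluates \(H=1+(q-1)b^q\) there). One small correction: \(A=0\) on the whole segment \(\{x+y=1,\ y\ge (k-1)/k-\delta\}\), not at a single point, and at its endpoint on the cutoff line the continuity of \(H_y=qaA^{q-2}\) (negative exponent) needs the bound \(0\le a\le A\), i.e.\ \(0\le H_y\le qA^{q-1}\to0\), as the paper notes---this is harmless, since axis-parallel segments in \(D\) touch the face \(x+y=1\) only at an endpoint.
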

\begin{proof}
In the region \(b=0\), we have \(H=1-a^q\).
In the region \(b>0\),
\[
 b=x-\frac{y+k\delta}{k-1},\qquad
 A=1-\frac{k(y+\delta)}{k-1},\qquad
 (a_x,a_y)=(-1,-1),\quad
 (b_x,b_y)=\left(1,-\frac1{k-1}\right).
\]
Thus \(A_x=0\), \(A_y=-k/(k-1)=-1/(q-1)\), and differentiating
the second expression in~\eqref{eq:kernel} gives
\[
 H_x=qA^{q-1},\qquad
 H_y=qA^{q-1}-qbA^{q-2}=qaA^{q-2}.
\]
All first and second derivatives in the two regions are listed below;
the entries are taken away from the cutoff and the boundary \(x+y=1\).
\begin{center}
\renewcommand{\arraystretch}{1.35}
\begin{tabular}{@{}lcc@{}}
\toprule
 & \(b=0\) & \(b>0\)\\
\midrule
\(H_x\) & \(qa^{q-1}\) & \(qA^{q-1}\)\\
\(H_y\) & \(qa^{q-1}\) & \(qaA^{q-2}\)\\
\(H_{xx}\) & \(-q(q-1)a^{q-2}\) & \(0\)\\
\(H_{xy}=H_{yx}\) & \(-q(q-1)a^{q-2}\) & \(-qA^{q-2}\)\\
\(H_{yy}\) & \(-q(q-1)a^{q-2}\) &
\(-qA^{q-3}\left(\dfrac{k-2}{k-1}a+b\right)\)\\
\bottomrule
\end{tabular}
\end{center}
The first derivatives are nonnegative and the second derivatives are
nonpositive, since \(k\ge2\).

On the separating line \(x=(y+k\delta)/(k-1)\), we have \(b=0\)
and \(A=a\). The formulas from both regions therefore give
\(H_x=H_y=qa^{q-1}\), so the first derivatives agree across the line.

As \(A\to0\) within the region \(b>0\), the inequality \(0\le a\le A\)
gives
\[
0\le H_y=qaA^{q-2}\le qA^{q-1}=H_x\longrightarrow0.
\]
Within the region \(b=0\), we have \(A=a\) and
\[
H_x=H_y=qa^{q-1}\longrightarrow0
\qquad\text{as }A=a\to0.
\]
Thus both first derivatives extend continuously to \(D\).

Each horizontal or vertical segment in \(D\) crosses the separating
line at most once. On each side of the line, the signs of the second
derivatives in the table imply that both \(H_x\) and \(H_y\) are
nonincreasing along either coordinate direction. Their continuity
extends this property across the separating line and to the endpoints
of the segment. This proves the derivative assertions.

Monotonicity and \(H(0,0)=0\) give nonnegativity. Completing to the full
face yields \(H(x,y)\le H(1-y,y)\). On that face \(a=0\), so
\[
 H(1-y,y)=1+(q-1)\pos{1-\frac{k(y+\delta)}{k-1}}^q
 \le 1+(q-1)\left(1-\frac{k\delta}{k-1}\right)^q=H(1,0).
\]
This proves~\eqref{eq:endpoint}.
\end{proof}

\paragraph{The hidden functions.}
Choose \(\sigma_e\in[k]\) independently and uniformly for each \(e\in V\).
Equivalently, \(\sigma\) defines the hidden partition
\((\{e:\sigma_e=1\},\ldots,\{e:\sigma_e=k\})\).
For \(\x\in(k+1)^V\), set
\[
 m(\x)=|\supp(\x)|,\qquad
 C_\sigma(\x)=|\{e\in\supp(\x):\x_e=\sigma_e\}|,
\]
and define
\begin{equation}\label{eq:hidden}
 f_\sigma(\x)=
 \frac{H\bigl(C_\sigma(\x)/n,\,(m(\x)-C_\sigma(\x))/n\bigr)}{H(1,0)}.
\end{equation}

\begin{lemma}\label{lem:valid}
Every \(f_\sigma\) is nonnegative, monotone, and \(k\)-submodular,
with \(f_\sigma(\zero)=0\) and \(\OPT(f_\sigma)=1\).
\end{lemma}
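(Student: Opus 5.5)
Nonnegativity, \(f_\sigma(\zero)=0\), and \(\OPT(f_\sigma)=1\) follow directly from Lemma~\ref{lem:kernel}. Indeed \(\zero\) gives \(H(0,0)=0\). Every \(\x\) maps to a point \((C_\sigma/n,(m-C_\sigma)/n)\in D\), so \(0\le f_\sigma\le1\). The assignment \(\x=\sigma\) labels every element correctly, which gives the point \((1,0)\) and the value \(1\).

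For monotonicity, note that adding element \(e\) with label \(i\) to \(\x\) increases either \(C_\sigma\) by one (if \(i=\sigma_e\)) or \(m-C_\sigma\) by one (otherwise). So a single step moves the point by \((1/n,0)\) or \((0,1/n)\). Since \(H\) is coordinatewise nondecreasing on \(D\), \(f_\sigma\) is nondecreasing along single steps, and \(\x\preceq\y\) is reached by a chain of such steps.

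For \(k\)-submodularity I will use the known characterization (Ward and \v{Z}ivn\'y): a function is \(k\)-submodular iff it is orthant submodular~\eqref{eq:orthant} and pairwise monotone, i.e.\ \(\Delta_{e,i}f(\x)+\Delta_{e,j}f(\x)\ge0\) for \(i\ne j\). Pairwise monotonicity is immediate from monotonicity, since each marginal is nonnegative. If I would rather not cite the characterization, I would instead check~\eqref{eq:ksub} directly by reducing to the pairwise case in the standard way, but citing it is cleaner.

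So the core step is orthant submodularity. Take \(\x\preceq\y\) and \(e\notin\supp(\y)\), and write \(p=(C_\sigma(\x),m(\x)-C_\sigma(\x))/n\) and \(p'\) for the analogous point of \(\y\). Then \(p\le p'\) coordinatewise, and both \(p+u\) and \(p'+u\) lie in \(D\), where \(u=(1/n,0)\) or \((0,1/n)\) depending on whether \(i=\sigma_e\). The marginal is a difference \(H(p+u)-H(p)\), which equals \(\int_0^{1/n}H_x(p+t e_1)\,dt\) (or the analogous integral of \(H_y\)). Here I use that \(H\) is continuously differentiable on \(D\), established in the proof of Lemma~\ref{lem:kernel}; at minimum \(H\) is absolutely continuous along segments. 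By Lemma~\ref{lem:kernel}, \(H_x\) and \(H_y\) are coordinatewise nonincreasing. Moving from \(p+te_1\) to \(p'+te_1\) increases each coordinate, and this can be done by one horizontal and one vertical move, both staying inside \(D\) because \(D\) is a down-closed region containing \(p'+te_1\). Hence the integrand at \(p'\) is at most the integrand at \(p\), giving \(\Delta_{e,i}f_\sigma(\x)\ge\Delta_{e,i}f_\sigma(\y)\).

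The main obstacle is rigor at the kink and at the boundary. The cutoff line and the face \(x+y=1\), where \(A^{q-2}\) can blow up since \(q<2\), must not break the integral representation. I would handle this using the continuity of the first derivatives on \(D\) proved in Lemma~\ref{lem:kernel}, together with the fact that the segments cross the cutoff line at most once. Alternatively, I would avoid derivatives entirely: a \(C^1\) function with coordinatewise nonincreasing partials has increments \(H(p+u)-H(p)\) that are nonincreasing in \(p\), by the mean value theorem applied to \(g(s)=H(s+u)-H(s)\) along coordinate segments.
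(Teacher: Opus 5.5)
Your proposal is correct and matches the paper's proof: both write each unit-step increment as \(\int_0^{1/n}\partial_jH\), use that the partials of \(H\) are continuous on \(D\) and coordinatewise nonincreasing to get monotonicity and orthant submodularity, and read off \(f_\sigma(\zero)=0\), \(0\le f_\sigma\le1\), and \(\OPT(f_\sigma)=1\) from Lemma~\ref{lem:kernel}. The one difference is the final step: you cite the Ward--\v{Z}ivn\'y characterization (orthant submodular plus pairwise monotone), whereas the paper proves the needed implication directly, which keeps the note self-contained. It deletes conflicting elements to obtain compatible \(\mathbf u\preceq\x\), \(\mathbf v\preceq\y\), and then sums orthant inequalities along an enumeration of \(\supp(\mathbf u)\setminus\supp(\mathbf v)\).
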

\begin{proof}
	Assigning an unused element increases exactly one kernel coordinate
	by \(1/n\), according to whether its label equals \(\sigma_e\).
	Let \(\mathbf{r}_1=(1,0)\) and \(\mathbf{r}_2=(0,1)\), and write
	\(\partial_1H=H_x\) and \(\partial_2H=H_y\).
	For coordinatewise ordered points \(\mathbf{z}\le\mathbf{z}'\) in \(D\),
	an index \(j\in\{1,2\}\), and \(t\ge0\) with
	\(\mathbf{z}'+t\mathbf{r}_j\in D\), Lemma~\ref{lem:kernel} gives
	\begin{equation}\label{eq:finite}
		\begin{aligned}
			H(\mathbf{z}+t\mathbf{r}_j)-H(\mathbf{z})
			&=\int_0^t\partial_jH(\mathbf{z}+s\mathbf{r}_j)\,ds\\
			&\ge\int_0^t\partial_jH(\mathbf{z}'+s\mathbf{r}_j)\,ds\\
			&=H(\mathbf{z}'+t\mathbf{r}_j)-H(\mathbf{z}')\ge0.
		\end{aligned}
	\end{equation}
	Under an extension \(\x\preceq\y\), both kernel coordinates can only
	increase. Taking \(t=1/n\) in~\eqref{eq:finite} proves monotonicity and
	all inequalities~\eqref{eq:orthant} for \(f_\sigma\).
	
	For completeness, these properties imply~\eqref{eq:ksub} as follows.
	For compatible assignments \(\mathbf{u},\mathbf{v}\)
	(no element has two distinct nonzero labels), enumerate the elements of
	\(\supp(\mathbf{u})\setminus\supp(\mathbf{v})\).
	Add them with their labels first to \(\mathbf{u}\sqcap\mathbf{v}\)
	and then to \(\mathbf{v}\).
	At every step the first partial assignment precedes the second;
	summing~\eqref{eq:orthant} gives
	\[
	f_\sigma(\mathbf{u})-f_\sigma(\mathbf{u}\sqcap\mathbf{v})
	\ge f_\sigma(\mathbf{u}\sqcup\mathbf{v})-f_\sigma(\mathbf{v}).
	\]
	For arbitrary \(\x,\y\), delete every conflicting element from both,
	obtaining compatible \(\mathbf{u}\preceq\x\), \(\mathbf{v}\preceq\y\).
	Then \(\mathbf{u}\sqcap\mathbf{v}=\x\sqcap\y\) and
	\(\mathbf{u}\sqcup\mathbf{v}=\x\sqcup\y\), so
	\[
	f_\sigma(\x)+f_\sigma(\y)
	\ge f_\sigma(\mathbf{u})+f_\sigma(\mathbf{v})
	\ge f_\sigma(\x\sqcap\y)+f_\sigma(\x\sqcup\y).
	\]
	Finally, Lemma~\ref{lem:kernel} gives \(0\le f_\sigma\le1\) and
	\(f_\sigma(\zero)=0\). The assignment \(\x_e=\sigma_e\) has kernel
	coordinates \((1,0)\) and value \(1\).
\end{proof}

\paragraph{Indistinguishability and the value gap.}
We now prove Theorem~\ref{thm:main}. It suffices to consider
\(0<\varepsilon<1-1/q\), since a larger approximation guarantee implies
one with a smaller positive \(\varepsilon\). Set \(\delta=\varepsilon/4\).
For a fixed assignment \(\x\), independent of \(\sigma\), the random
variable \(C_\sigma(\x)\) is binomial with parameters \(m=m(\x)\) and \(1/k\).
The kernel lies in the region \(b>0\) exactly when
\(C_\sigma(\x)-m/k>\delta n\). For \(m>0\), Hoeffding's inequality gives
\begin{equation}\label{eq:concentration}
 \Prob_\sigma\!\left[C_\sigma(\x)-\frac{m}{k}>\delta n\right]
 \le e^{-2\delta^2n^2/m}\le e^{-2\delta^2n}.
\end{equation}
For \(m=0\), the event is empty. Outside this event, i.e, \(b=0\), the value is exactly
\begin{equation}\label{eq:fake}
 g(\x):=f_\sigma(\x)=\frac{1-(1-m(\x)/n)^q}{H(1,0)}\le\frac1{H(1,0)},
\end{equation}
independent of the hidden labels. The function \(g\) is itself a
nonnegative monotone \(k\)-submodular function: its nonnegative marginal
depends only on \(m\) and decreases with \(m\), and the argument of
Lemma~\ref{lem:valid} applies.

Suppose an algorithm makes at most \(Q\) queries on every input and
every run. Fix its random tape and run it against \(g\).
Its queries and final output are then fixed independently of \(\sigma\).
Applying~\eqref{eq:concentration} to these at most \(Q+1\) assignments
shows that all have \(b=0\), except with probability at most
\((Q+1)e^{-2\delta^2n}\). Whenever all these assignments have \(b=0\), the run against
\(f_\sigma\) has the same transcript: the first answers agree, and
equality of the previous answers forces the same next query and answer,
until the same output is returned. The output was included in the
probability bound, so its value is at most \(1/H(1,0)\).
On the exceptional event it is at most \(1\).
Averaging over the random tape therefore gives
\begin{equation}\label{eq:average}
 \E_{\sigma,\mathrm{alg}}[f_\sigma(\mathrm{output})]
 \le\frac1{H(1,0)}+(Q+1)e^{-2\delta^2n}.
\end{equation}

Since \(q-1=(k-1)/k\), convexity of \(t\mapsto(1-t)^q\) yields
\[
\begin{aligned}
	H(1,0)
	&=1+(q-1)\left(1-\frac{\delta}{q-1}\right)^q\\
	&\ge 1+(q-1)\left(1-\frac{q\delta}{q-1}\right)\\
	&=q-q\delta=q(1-\delta).
\end{aligned}
\]
Our choice gives \(0<\delta<1/8\) and \(q=2-1/k \ge 2-1/2=3/2\), hence
\[
 \frac1{H(1,0)}\le\frac1{q(1-\delta)}
 \le\frac1q+2\delta=\frac{k}{2k-1}+\frac\varepsilon2.
\]
If the algorithm guarantees \(k/(2k-1)+\varepsilon\) on every
\(f_\sigma\), then \(\OPT(f_\sigma)=1\) and~\eqref{eq:average} imply
\[
 \frac\varepsilon2\le(Q+1)e^{-2\delta^2n},\qquad
 Q\ge\frac\varepsilon2\exp\left(\frac{\varepsilon^2n}{8}\right)-1.
\]
This proves Theorem~\ref{thm:main}.

\paragraph{AI assistance.}
GPT 6.1 Astra constructed the hard function in this note, using the
Burkholder kernel cited above.

\end{document}